\numberwithin{equation}{section}
\newcommand{\mytitlebib}[1]{\textit{#1}}
\theoremstyle{remark}
\newtheorem{hypothesis}{\normalfont\itshape Hypothesis}
\title[Finite reductions and inertial manifolds]
{Approximate Inertial manifold approach to non-equilibrium thermodynamics}
\author[Cardin, Favretti, Lovison]
{  Franco Cardin \qquad Marco Favretti \qquad Alberto Lovison \\
\address{\hspace{-0.42cm}Dipartimento di Matematica ``Tullio Levi-Civita'',
Universit\`a degli Studi di Padova,
Via Trieste 63, 35121 Padova,
ITALY}}
\subjclass[2010]{
82C05      Classical dynamic and nonequilibrium statistical mechanics (general)
60F10  	Large deviations
37L25  	Inertial manifolds and other invariant attracting sets
35Q84  	Fokker-Planck equations
70H20  	Hamilton-Jacobi equations
}
\keywords{Non-equilibrium Thermodynamics, Lyapunov-Schmidt Reduction, Inertial Manifolds, Collective Variables,  Fokker-Planck equation, Hamilton-Jacobi equation, Large Deviations}
\begin{document}

\begin{abstract}In this paper a reaction-diffusion type equation is the starting point for setting up a genuine thermodynamic reduction, i.e. involving a finite number of parameters  or collective variables, of the initial system. This program is carried over by firstly operating a finite Lyapunov-Schmidt reduction of the cited reaction-diffusion equation when reformulated as a variational problem. In this way we gain an approximate finite-dimensional o.d.e. description of the initial system which preserves the gradient structure of the original one and that is similar to the approximate  inertial manifold description of a p.d.e. introduced by Temam and coworkers.
Secondly, we resort to  the stochastic version of the o.d.e., taking into account in this way the uncertainty (loss of information)  introduced with the above mentioned reduction. We study this reduced stochastic system using classical tools from large deviations, viscosity solutions and weak KAM Hamilton-Jacobi theory.
In the last part we highlight  some essential similarities existing between our approach and the comprehensive treatment  non equilibrium thermodynamics given  by Jona-Lasinio and coworkers. The starting point of their axiomatic theory --motivated by large deviations description of lattice gas models-- of systems in a stationary non equilibrium state is precisely a conservation/balance law which is akin  to our simple model of reaction-diffusion equation.
\end{abstract}

\maketitle

\tableofcontents

\centerline{\today}

\setcounter{section}{0}
\setcounter{subsection}{1}
\section{Introduction}
Classical equilibrium thermodynamics, in its essence, aims at describing the equilibrium states of a system composed of many particles  using a small number of variables. The individuation of the right  number of macroscopic variables, often called  \emph{collective variables} \cite{Boldrighini:1987ez,Maragliano:2006mf,Muller2002393,Yukhnovskiui:1987ft}, is a non trivial task. 

This  problem has received attention in the chemical physics community: if the system is described by a Fokker-Planck equation,  the existence of a \emph{large gap} in the spectrum of the associated elliptic operator   is interpreted, from a physical point of view, as the emergence of a finite dimensional description of the system in terms of its leading eigenvalues.  This  is, roughly speaking, the content of the Kramers-Klein theory,  which dates back to the seminal paper \cite{K}; see e.g.\,\cite{M} and \cite{MC}  for an exposition and some applications of this framework.
Nowadays a sound justification of the above Kramers-Klein theory relies on the Witten theory \cite{Witten},  also discussed  in Arnold-Keshin \cite{AK}.

Another reductionistic approach which is relevant for the present work  is the   Amann--Conley--Zehnder
reduction (ACZ in what follows) 
\cite{Amann:1980kr,
Amann:1979mq,
Conley:1976bf}. 
ACZ is a global Lyapunov-Schmidt type reduction transforming an infinite dimensional variational principle  into an equivalent finite dimensional one. 
Namely, a static PDE variational problem becomes perfectly equivalent to a finite algebraic system, see Section \ref{sec:ACZ}.
This method has been employed in conjunction with topological techniques
for proving results of existence and multiplicity of solutions for nonlinear differential equations
\cite{Cardin:2008ix,Cardin:2014,
Viterbo:1990wj}.
Notably for our purposes, we have that the ACZ philosophy can be applied also to dissipative dynamical equations, giving rise to the notion of \emph{inertial manifolds} \cite{T}.
An inertial manifold is a finite dimensional smooth structure which contains the attractor of the system and which is also invariant with respect to the evolution equations. Moreover, every orbit of the system starting outside of the manifold is exponentially rapidly attracted towards the manifold itself. This feature supports the idea that the global motion could be effectively studied and controlled by studying the motion  restricted to the inertial manifold.  
In principle, the application of this methodology is extremely appealing and promising for a wide range of applications. However the existence of such manifolds is not guaranteed in the general case and furthermore the numerical exploitation of such structures is usually problematic. 
Nevertheless, given the different scales at which the modeled phenomena occur, some of the fine details of the motion can be neglected without affecting the general and long time behavior of the system.  It is possible therefore to define an asymptotic series expansion of such manifolds, and such expansion have a simple explicit formulation and can be implemented numerically in an effective and efficient  way.  Truncations in this series expansions are called \emph{approximate inertial manifolds} (AIM) and have been applied successfully in a range of situations and for a wide spectrum of phenomena  \cite{T2,T3,T}. 
Eventually,  AIMs can be defined and employed for studying the global motion even when an exact inertial manifold does not exist. 

Coming back to thermodynamics, an even more challenging task is  the description of a system in an out of equilibrium state. In this realm the classical thermodynamic description, except for the case of moderate deviation form equilibrium, is less powerful and even the status of basic notions as non equilibrium free energy or entropy is debated.  

In a series of papers  \cite{MFT,JL2009,PRL,IL}  G. Jona-Lasinio and coworkers develop a theory aiming at a macroscopic description of thermodynamic systems in a stationary non equilibrium state (driven diffusive systems) in terms of a finite number of space--time  fields representing pertinent thermodynamic variables. The theory has an axiomatic format that stems from generalization of previous rigorous result for lattice gas systems where the macroscopic equations are the hydrodynamic limit equations. A thorough treatment of large deviations for these systems was given by Varadhan and others \cite{KL:1999,Varadhan:1984}.  For a driven diffusive system the macroscopic equations are balance equations and for these Jona-Lasinio and co-workers develop a theory of macroscopic fluctuations which is a generalization to fields variables of the Friedlin and Wentzell theory  of large deviations for ODE. One of the most interesting features of the theory is that the infinite dimensional analog $\mathcal{F}$ -see $(\ref{enlibera})$- of the \emph{quasi potential} of the Friedlin and Wentzell theory is interpreted as the non equilibrium free energy of the system in a stationary non equilibrium state, without any assumption of moderate deviations from equilibrium. 

The guiding principle of this paper is the following: starting from an infinite dimensional system (PDE) akin to those considered by Jona-Lasinio we operate a reduction and we study a stochastic version of the resulting finite dimensional ODE. We can thus make use of the  standard Friedlin and Wentzell Theory. 
Here is the sketch of our strategy.
  
\begin{itemize}
\item We start from a reaction-diffusion PDE which is also a $L^2-$\emph{gradient type} system, see (\ref{RD}), thought as a simplified prototype model for the balance law governing the evolution of a open system in a stationary non equilibrium state.

\item We  operate a finite dimensional reduction of it obtaining in this way a finite dimensional ODE, equation $(\ref{gra})$,  still of \emph{gradient type}. The true dynamics approaches the reduced one and the  difference between them can be estimated by the associated approximate inertial manifold, see sect.\,\ref{AIM1}, \ref{AIM2}.

\item We mimic the loss of information linked to the finite approximate reduction  above  introduced adding  a stochastic noise to this deterministic ODE $(\ref{gra})$ and  we consider the associated probabilistic Fokker-Planck equation. A careful description of the limiting behavior of the SDE is afforded by   the standard Friedlin-Wentzell theory of large deviations for stochastic ODE.  A major role in  this theory is played by the \emph{action functional} over the dynamic evolution of the system and its associated variational principle which defines the so-called \emph{quasi-potential} -see $(\ref{V})$ and $(\ref{veq})$- The related Hamilton-Jacobi equation is investigated in this paper with viscosity and weak KAM theory techniques. 

 \end{itemize}

We are confident that the results of this approach could shed light on the more elaborate infinite dimensional analogs  of the Jona-Lasinio theory. 

For example, in the finite dimensional case,  using asymptotic theory of viscosity solutions we are able to prove that the right hand member of the finite-dimensional H-J equation $(\ref{SHJ})$ is   the  Ma\~n\'e critical value,  which  is zero in the present case, as we resumed in Prop.\,\ref{Prop1}. The need of setting equal to zero the right hand member of $(\ref{infinite})$  is not obvious in the Jona-Lasinio formulation.

Next,  for the gradient type ODE $(\ref{gra})$ we compute explicitly the quasi potential $V^{eq}_\infty$ and we show that it coincides with potential energy $W$ whose gradient $-\nabla W$ gives the gradient reduced dynamics. 

Moreover, in sect.\,\ref{confronto}  we  propose a  comparison  between the infinite dimensional free energy $\mathcal{F}$ and a finite dimensional analog of it  -see $(\ref{enlibera2})$- which is expressed in terms of the quasi-potential  $V^{eq}_\infty$.

As a final remark, we want to draw attention on the fact that our finite-dimensional gradient equation describes the evolution of projection over a finite dimensional subspace of and evolutive PDE, that gives rise to a finite dimensional ODE. The abstract nature of this ODE is on the one hand   the major limit to extend the comparisons between the initial physical problem and its reduced dynamic and on the other hand the source of its effectiveness to give a ``thermodynamic type'' description of the system in terms of ``a few'' collective variables.

\setcounter{equation}{0}

\vskip 1 cm

\par
\section{Gradient dynamics  from exact  finite reduction} 
Let $\Delta$ be the elliptic Laplace-Beltrami operator on a closed Riemann manifold $(\Omega, g)$, $|v|^2=\langle v,v\rangle_g$ and  $V:\R\to \R$ be the potential energy function. Our aim  is to reconsider the reaction-diffusion equation 
\begin{equation}\label{RD}
 \frac{\partial u}{\partial t}=\Delta u-V'(u),
\end{equation}
thought as a simplified prototype model for the balance law governing the evolution of the field in the macroscopic theory of driven diffusive systems in non stationary thermodynamics,  see \cite{JL2009}, equation $(\ref{axiom3})$ below. Even though (\ref{RD}) does not admit a standard variational  formulation, the search of its equilibria does. Indeed, considering the following real valued functional\footnote{E.g.  $\cH=W^{2,2}_0(\Omega;\bR)$} $\mathcal{J}: \cH \rightarrow \bR$, 
\begin{equation}\label{PF}
\mathcal{J}(u)=\int_{\Omega}\left[\frac{1}{2}|\nabla u(x)|^2+V(u(x))\right]dx
\end{equation}
The variational principle
\begin{equation}\label{ELI}
{\mathcal{J}}'(u)h=-\int_{\Omega}\left(\Delta u-V'(u)\right)h=0\quad \forall h\in \cH
\end{equation}
produces the
Euler-Lagrange equation  
\begin{equation}\label{EL}
\Delta u=V'(u)
\end{equation}

Moreover, again by using $\mathcal{J}$, equation $(\ref{RD})$ can  also be read in a distributional $L^2$ weak format, as an infinite dimensional gradient system
\begin{equation}\label{RD-w}
\int_{\Omega}\frac{\partial u}{\partial t}h=\int_{\Omega}\left(\Delta u-V'(u)\right)h=-\mathcal{J}'(u)h,\quad \forall h\in C^\infty_0,
\end{equation}
whose solutions would be `running' to the equilibria solutions of (\ref{EL}). The equation (\ref{RD-w}) appears as  a \textit{gradient descent}  
equation and its equilibrium solutions  are the critical points of (\ref{PF}).

Notice that  the `static' functional $\mathcal{J}$ in  (\ref{PF}) plays the natural role of candidate Lyapunov functional:

\begin{equation}\label{ILF}
\frac{d}{dt}\mathcal{J} (u)={\mathcal{J}}'(u) \frac{\partial u}{\partial t}=-\int_{\Omega}\left(\Delta u-V'(u)\right) \frac{\partial u}{\partial t}=
-\int_{\Omega}\left|\Delta u-V'(u)\right|^2\leq 0
\end{equation}

\subsection{A global Lyapunov-Schmidt reduction }\label{sec:ACZ}
Suppose that the $\Delta$-spectral representation of $\cH$ does work,
\begin{equation}
\Delta u_j=-\lambda_j u_j, \quad u_j|_{\partial \Omega}=0,
\quad\langle u_i , u_j\rangle= \int_{\Omega}u_i u_j=\delta_{ij},
\quad \lambda_0=0<\lambda_1\leq  \dots 
\end{equation}
and then consider an `a priori' cut-off $m\in \bN$,
\begin{equation}
\cH={\bP}_m \cH\oplus {\bQ}_m \cH\ni u=\mu+\eta,
\end{equation}
\begin{equation}
\mu={\bP}_m u=\sum_{j=1}^m \langle   u, u_j   \rangle u_j, \qquad \eta={\bQ}_m u=\sum_{j>m} \langle   u, u_j   \rangle u_j
\end{equation}
Denoting by $g(f)$ the unique solution to the Poisson equation (the so-called $\Delta^{-1}$)
 \begin{equation}\label{g}
g(f)=-\sum_{j>0}\frac{\langle   f, u_j   \rangle}{\lambda_j   }u_j, \qquad \Delta g(f)=f, \quad f\in \cH,
\end{equation}
we translate (\ref{EL}) into a fixed point problem:
\begin{equation}\label{EL1}
u=g(V'(u))
\end{equation}
and consider the following decomposition of (\ref{EL1}),
\begin{equation}        \label{dec}
\left\{
\begin{array}{rcl}
\mu &=&\bP_m g (V'(    \mu +\eta))   \\  \\
\eta &= &\bQ_m g (V'(    \mu +\eta))   
\end{array}
\right.
\end{equation}
Look at (\ref{dec}$)_2$: we will check that if $V'$ is \textit{globally Lipschitz},
\begin{equation}
C=\text{Lip}(V')<+\infty,
\end{equation}
then we can choose the cut-off $m$ such that the following map, from $\bQ_m\cH$ into itself, for any fixed $\mu\in\bP_m \cH$, is a contraction:
$$
\bQ_m\cH \longrightarrow \bQ_m\cH
$$
\begin{equation}
\eta\longmapsto \bQ_m g (V'(    \mu +\eta))   
\end{equation}
To see this, we have that for any $\eta_1, \eta_2\in \bQ_m\cH$, recalling (\ref{g}),
$$
\Vert
\bQ_m g (V'(    \mu +\eta_1))   -\bQ_m g (V'(    \mu +\eta_2)) \Vert\leq \frac{C}{\lambda_m}\Vert \eta_1 -\eta_2     \Vert
$$
and since the sequence $\lambda_0=0<\lambda_1\leq \dots $ is growing and   unbounded, we can choose $m$ so that
 \begin{equation}
\frac{C}{\lambda_m}<1
\end{equation}
Denote by $\tilde \eta (\mu)$ the unique fixed point function such that, for any $\mu\in \bP_m\cH$,
\begin{equation}\label{fp}
\tilde \eta (\mu)=\bQ_m g (V'(    \mu +\tilde \eta (\mu)))
\end{equation}
Finally, we see that  the solutions of the finite part (\ref{dec}$)_1$ are related to a finite  variational setting.
Indeed, defined the real valued function
\begin{equation}\label{Fr}
W:\bR^m \to \bR, \qquad W(\mu):=\mathcal{J}( \mu +\tilde \eta (\mu)),
\end{equation}
we see that
$$
dW(\mu)d\mu={\mathcal{J}}'(\mu +\tilde \eta (\mu))(d\mu+d\tilde \eta (\mu)d\mu)=
$$
$$
=-\langle   \Delta u-V'(u)\Big|_{u=\mu +\tilde \eta (\mu)}  ,   d\mu+d\tilde \eta (\mu)d\mu                  \rangle= $$

$$
=-\langle   \Delta u-V'(u)\Big|_{u=\mu +\tilde \eta (\mu)}  ,   \bP_m d\mu+\bQ_m d\tilde \eta (\mu)d\mu                  \rangle= 
$$
$$
=-\langle   \Delta u-V'(u)\Big|_{u=\mu +\tilde \eta (\mu)}  ,   \bP_m d\mu\rangle=-\langle   \Delta u-V'(u)\Big|_{u=\mu +\tilde \eta (\mu)}  ,   d\mu\rangle,
$$
\medskip

\begin{equation}\label{Fin}
dW(\mu)d\mu=-\langle   \Delta u-V'(u)\Big|_{u=\mu +\tilde \eta (\mu)}  ,   d\mu\rangle
\end{equation}
In other words, the critical points $\mu$ of $W$ are \textit{exactly} the solutions of (\ref{dec}$)_1$. Moreover, we can say that all the solutions of the variational principle $\mathcal{J}'=0$, are obtained by the above finite reduction ${\nabla W}=0$: insert a solution $\mu$ of (\ref{dec}$)_1$ in (\ref{fp}), obtaining (\ref{dec}$)_2$, then add $\mu$ to the resulting $\tilde \eta (\mu)$, recovering $u$, solution of $\mathcal{J}'=0$.

\subsection{A finite gradient equation}
The  proposal of finite  reduction of the dynamic equation (\ref{RD}), recalling (\ref{ELI}), (\ref{Fr}) and (\ref{Fin}), goes as follows.

Let us consider the above $\bP_m$ and $\bQ_m$ projections of  (\ref{RD}) --or (\ref{RD-w})--
\begin{equation}        \label{dec1}
\left\{
\begin{array}{rcl}
\mu_t &=&\Delta \mu -\bP_m  V'(    \mu +\eta)   \\  \\
\eta_t &= &\Delta \eta -\bQ_m  V'(    \mu +\eta) 
\end{array}
\right. 
\end{equation}
As an approximation of it,  we compose any occurrence of $\eta$ in (\ref{dec1}$)_1$ by the above fixed point function $\tilde\eta (\mu)$ in (\ref{fp}), so that it becomes 
$$
\mu_t =\Delta \mu -\bP_m  V'(    \mu +\tilde\eta (\mu) )
$$
and necessarily  (\ref{dec1}$)_2$ has to be substituted consistently by 
$$
0=\Delta \tilde\eta (\mu) -\bQ_m  V'(    \mu +\tilde\eta (\mu))
$$
which is identically satisfied in view of (\ref{fp}).
Eventually, we have obtained 
\begin{equation}        \label{dec2}
\left\{
\begin{array}{rcl}
\mu_t &=&\Delta \mu -\bP_m  V'(    \mu +\tilde\eta (\mu) )\\  \\
0&= &\Delta \tilde\eta (\mu) -\bQ_m  V'(    \mu +\tilde\eta (\mu))
\end{array}
\right.
\end{equation}
We see that (\ref{dec2}$)_1$, in view of (\ref{Fin}), is precisely 
\begin{equation}
\label{gra}
\frac{d\mu}{d t}(t)=-{\nabla W}(\mu(t))
\end{equation}
We point out that for the  above outlined reduced equation the set of critical points is  in a one-to-one correspondence with the equilibria of the original reaction--diffusion equation. If ${\mu}^*$ is a critical point, minimum for $W$, then $W$ is a (local) Lyapunov function for the ODE dynamics (\ref{gra}) around ${\mu}^*$:
$$
\dot W={\nabla W}\cdot \dot\mu=-|{\nabla W}|^2\leq 0
$$

A main feature of this approximate reduction is that it preserves the 
gradient-like character of  the PDE (\ref{RD}). 
We point out that it can be interpreted as a sort of ``quasi-static'' thermodynamic version of (\ref{RD}), precisely for the presence of ``zero'' in the l.h.s. of (\ref{dec2}$)_2$. 
\subsection{Inertial Manifolds (IM) and Approximate Inertial Manifold (AIM)}\label{AIM1}
It turns out that the finite exact reduction scheme introduced above has strong connections with the theory of inertial and approximate inertial manifolds developed by R. Temam and coworkers \cite{RT,T,T2}. In order to compare the effectiveness of both theories to deal with our problem,
we briefly review the fundamental facts regarding inertial manifolds and their approximated counterparts and  then we conclude showing how to  use in our setting an  important estimate developed in  the AIM framework.

\smallskip

Consider an arbitrary finite or infinite dynamical system 
\begin{equation}\label{eq:dynsys1}
	\deinde{u}{t} = \mathcal{F}(u) \qquad \tonde{e.g., \qquad \mathcal{F}(u) = \Delta u - V'(u), \qquad \text{as in (\ref{RD})}},
\end{equation}
defined on a Hilbert space $\cH$, associated to a semigroup $\set{S(t)}_{t\geq 0}$, where $S(t)$ is the mapping 
\begin{equation}
	S(t): u_0 \longmapsto u(t) = S(t) u_0, 
\end{equation}
and $u(t)$ is the solution of (\ref{eq:dynsys1}) such that $u(0) = u_0$. 

\begin{definition}
 An \emph{inertial manifold} $\mathcal{M}\subseteq \cH$ of the system (\ref{eq:dynsys1}) is a finite dimensional Lipschitz manifold with the following properties:
\begin{enumerate}[$(i)$]
	\item $\mathcal{M}$ is positively invariant for the semigroup, i.e., $S(t)\mathcal{M} \subseteq \mathcal{M}$, $\forall t\geq 0$. 
	\item $\mathcal{M}$ attracts all the orbits of (\ref{eq:dynsys1}) at exponential rate.
\end{enumerate}
An \emph{inertial system} for (\ref{eq:dynsys1}) is the system obtained by restricting the problem (\ref{eq:dynsys1}) to $\mathcal {M}$.
\end{definition}

In the case we are interested in, i.e.,  $ \mathcal{F}(u) = \Delta u - V'(u)$, we consider inertial manifolds in the form of a graph of a function $\Phi: \Pm\mathcal{H}\to\Qm\mathcal{H}$:
\begin{equation}
	\mathcal{M} = \graph(\Phi) = \set{u = \tonde{\mu,\eta} = \tonde{\mu,\Phi(\mu)} \in\mathcal{H}\colon \mu\in\Pm \mathcal{H}}.
\end{equation}

The meaning of condition $(i)$ is that every orbit starting in $\mathcal{M}$ must remain in $\mathcal{M}$. This can be rephrased by saying that equation (\ref{RD}) must be automatically satisfied along $\mathcal{M}$. Taking into account of the splitting (\ref{dec1}) of (\ref{RD}) along the eigenspaces of $-\Delta$, 
we obtain the following conditions on $\Phi$:
\begin{equation}        \label{eq:inertial_cond_system_phi}
	\begin{cases}
		\mu_t &= \Delta  \mu -\bP_m  V'(    \mu +\Phi (\mu)),  \\
		\Phi (\mu)_t &= \Delta \Phi (\mu) -\bQ_m  V'(    \mu +\Phi (\mu)), 
	\end{cases}
\end{equation}
which can be resumed in the following equation for $\Phi$:
\begin{equation}\label{eq:inertial_cond_uniq_phi}
	\Phi' (\mu)[\Delta \mu -\bP_m  V'(    \mu +\Phi (\mu))]=\Delta \Phi (\mu) -\bQ_m  V'(    \mu +\Phi (\mu)).
\end{equation}
We refer to equation (\ref{eq:inertial_cond_uniq_phi}) as the equation for the exact inertial manifold. 
As anticipated above, the solution of this equation is an ambitious goal, 
given that it 
leads to a finite ordinary differential equation (\ref{eq:inertial_cond_system_phi})$_{1}$ substituting the original PDE. 
It would be very desirable to have a general method for solving, at least numerically, equation (\ref{eq:inertial_cond_uniq_phi}).
However, this manifold is not guaranteed to exist in the general case, and even when the manifold exists, it may not be easily tractable by numerical methods. 

At this stage some heuristic considerations are helpful to get to the next step, consisting in deriving approximate versions of (\ref{eq:inertial_cond_uniq_phi}) and correspondingly to \emph{approximate inertial manifolds} (AIM). Because of the dissipative character of the dynamics, we have that for an arbitrary orbit $u(t):=(\mu(t),\eta(t))$,  the magnitude of the 
tail $\eta(t)$ becomes sensibly smaller than the magnitude of the head $\mu(t)$ after a transient period, i.e., $\abs{\eta(t)} \ll \abs{\mu(t)}$. 
This allows us to assume that $ 
	\abs{\Phi(\mu)} \ll \abs{\mu}$,
and to consider as a first trivial approximate inertial manifold the graph of the null function $\Phi(\mu)\equiv 0$ leading to the 
\begin{equation}
	\text{\emph{flat manifold}} := \set{(\mu,0)\colon \mu\in\Pm \mathcal{H}}\subseteq \mathcal{H}.
\end{equation}
 The corresponding inertial system would be then given by the finite equation
\begin{equation}
	\mu_t = \Delta \mu - \Pm V'(\mu). 
\end{equation}
In the same spirit, we  avoid to drop completely $\Phi(\mu)$ and all its derivatives from equation (\ref{eq:inertial_cond_system_phi}), but we rather  neglect only the terms $\Phi(\mu(t))$ and $\Phi'_t(\mu(t))$ in (\ref{eq:inertial_cond_system_phi})$_2$, obtaining the following equation
\begin{equation}\label{eq:phizero}
	0 =  \Delta \Phi (\mu) -\bQ_m  V'(    \mu ) \qquad \Longrightarrow \qquad 
	\Phi (\mu) := -(-\Delta)^{-1}\bQ_m  V'(    \mu ),
\end{equation}
which has the great advantage of giving an explicit formulation for $\Phi$, straightforwardly 
amenable to numerical computation. (See, e.g., (1.26) p. 569 in \cite{T}.)%, or (18) in \cite{T1}. 

The solution $\Phi_0$  of (\ref{eq:phizero}) is  called an \emph{approximate inertial manifold}. This is the primal instance of AIM, defined for the first time in the context of 2-dimensional Navier-Stokes equation \cite{T2} by C. Foias, O. Manley and R. Temam. 
This first appearance has been reused many times subsequently in both theoretical and numerical researches.

Two fundamental facts about $\Phi_0$  are worth mentioning  
\begin{enumerate}[(i)]
	\item Under some reasonable hypotheses, it is possible to prove the following fact \cite{T,Marion:1989}. Let $u(t) = (\mu(t),\eta(t))$ be any orbit of the full system. Then after a sufficiently long transient period the tail $\eta(t)$ will be smaller than the head $\mu(t)$, more precisely we have
	\[ d(u(t),\set{\text{\textit{flat manifold}}}) \cong \abs{\eta(t)} 
	\leq C \abs{\frac{\lambda_1}{\lambda_{m+1}}}.
	\]
	This means that the flat manifold $\set{(\mu,0)}$ is an approximation of order $ \abs{\frac{\lambda_1}{\lambda_{m+1}}}$. At the same time the distance of the orbit from the AIM $\mathcal{M}_0 = \textrm{graph}\Phi_0$ is estimated by
	\[ d(u(t),\mathcal{M}_0) \leq C \abs{\frac{\lambda_1}{\lambda_{m+1}}} ^2,
	\]
	which is sensibly smaller when $\lambda_{m+1}$ is large. Therefore the AIM $\mathcal{M}_0$ is a far better approximation of the exact inertial manifold if compared with the flat manifold. 
	\item It is possible to iterate the procedure giving rise to $\Phi_0$ for defining AIMs of higher order, $\Phi_1$, $\Phi_2$,\dots,$\Phi_k$,\dots, which can be proved to converge to the exact inertial manifold (see \cite[chap.\,X]{T}), as $k\to\infty$.
\end{enumerate}

\subsection{An alternative AIM from the static reduction.} 
\label{AIM2}
In view of $(\ref{dec2})_2$ it seems reasonable  to use as AIM the solution $\widetilde\eta$ of the equation related to the static problem
\begin{equation}\label{eq:psizero}
	0 =  \Delta \widetilde\eta (\mu) -\bQ_m  V'(    \mu + \widetilde\eta(\mu)).
\end{equation}
which is very close to the primal AIM $\Phi_0$ obtained in (\ref{eq:phizero}). 
Indeed, in Theorem \ref{teo:aimdist} below we want to prove that the manifold 
\begin{equation}
	\widetilde{\mathcal{M}}:= \mathrm{graph}(\widetilde\eta(\mu))=\set{(\mu,\widetilde\eta(\mu))\colon \mu\in \Pm \mathcal{H}}
\end{equation}
is an approximate inertial manifold with the same accuracy of $\Phi_0$. 
\begin{hypothesis}\label{hyp:vprime}
 	Assume that $V'$ is a Nemitsky operator, i.e., $V'(u) := \gamma(u)$, with $\gamma:\R\to\R$ compactly supported with Lipschitz constant $L_{V'}>0$. 
\end{hypothesis}

\begin{theorem}\label{teo:aimdist}
 	Assume that the hypothesis \ref{hyp:vprime} holds. Then, for $t$ sufficiently large, $t\geq t^\star$, any orbit of (\ref{RD}) remains at a distance in $\mathcal{H}$ of $\widetilde{\mathcal{M}}$ 
	bounded by 
	\[ d(u(t),\widetilde{\mathcal{M}}) \leq \kappa \abs{\frac{\lambda_1}{\lambda_{m+1}}} ^2,
	\]
	where $\kappa$ and $t^\star$ are appropriate constants depending on the data $\Omega, d, \gamma$ and on $R$ when $\abs{u_0} \leq R$. 
\end{theorem}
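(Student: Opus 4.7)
The plan is to reduce the claim to a purely algebraic comparison between the static manifold $\widetilde{\mathcal{M}}$ and the primal AIM $\mathcal{M}_0=\mathrm{graph}(\Phi_0)$ of (\ref{eq:phizero}), and to exploit on $\mathcal{M}_0$ the Foias--Manley--Temam estimate recalled in item $(i)$ of the previous subsection. Under Hypothesis \ref{hyp:vprime} the semigroup generated by (\ref{RD}) is dissipative (compact support of $\gamma$ produces the absorbing ball), so that there exist $t^\star=t^\star(R,\Omega,\gamma)$ and $C_1$ with $d(u(t),\mathcal{M}_0)\leq C_1|\lambda_1/\lambda_{m+1}|^2$ for $t\geq t^\star$. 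The theorem will then follow at once from the triangle inequality
\[ d\bigl(u(t),\widetilde{\mathcal{M}}\bigr)\;\leq\;\|\eta(t)-\Phi_0(\mu(t))\|\;+\;\|\Phi_0(\mu(t))-\widetilde\eta(\mu(t))\|,\]
provided the second summand is itself $O(1/\lambda_{m+1}^2)$ uniformly in $\mu(t)$.

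For the uniform algebraic comparison I would rewrite both manifolds via the Green operator $g=\Delta^{-1}$ of (\ref{g}): $\Phi_0(\mu)=g(\bQ_m V'(\mu))$ from (\ref{eq:phizero}), and $\widetilde\eta(\mu)=g(\bQ_m V'(\mu+\widetilde\eta(\mu)))$ from (\ref{eq:psizero}). Subtracting yields the compact identity
\[ \widetilde\eta(\mu)-\Phi_0(\mu)\;=\;g\bigl(\bQ_m\bigl[V'(\mu+\widetilde\eta(\mu))-V'(\mu)\bigr]\bigr),\]
and two elementary bounds suffice to control it. First, the spectral estimate $\|g(f)\|\leq\|f\|/\lambda_{m+1}$ on $\bQ_m\cH$, already used in sect.\,\ref{sec:ACZ}. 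Second, because $\gamma$ is compactly supported and Lipschitz, $\|V'(v)\|_{L^2(\Omega)}\leq\|\gamma\|_\infty|\Omega|^{1/2}=:M$ for every $v$; plugging this into the fixed-point identity for $\widetilde\eta$ gives the uniform bound $\|\widetilde\eta(\mu)\|\leq M/\lambda_{m+1}$. Combined with the Lipschitz property of $V'$ applied to the displayed difference,
\[ \|\widetilde\eta(\mu)-\Phi_0(\mu)\|\;\leq\;\frac{L_{V'}}{\lambda_{m+1}}\|\widetilde\eta(\mu)\|\;\leq\;\frac{L_{V'}M}{\lambda_{m+1}^2},\]
which is of the required order and can be absorbed into a constant $\kappa$.

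The truly technical part is not this algebraic comparison but the activation of Temam's $(\lambda_1/\lambda_{m+1})^2$ bound for $\Phi_0$: one must establish the standard a priori estimates for (\ref{RD}) under Hypothesis \ref{hyp:vprime}, namely existence of an absorbing ball in $\cH$ (and, in order to control the error one makes in passing from the exact invariance identity (\ref{eq:inertial_cond_uniq_phi}) to its approximation (\ref{eq:phizero}) by discarding $\Phi'(\mu)\mu_t$, possibly also in a higher-regularity space) yielding uniform bounds on $\|\eta(t)\|$, on $\|\mu_t\|$ and on the remainder terms, with constants depending only on $\Omega,d,\gamma$ and $R$. These are the estimates that ultimately fix $\kappa$ and $t^\star$ in the statement; once they are in place the two paragraphs above close the proof.
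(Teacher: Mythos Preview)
Your argument is correct but proceeds along a genuinely different line from the paper's. The paper compares the orbit tail $\eta(t)$ directly with $\widetilde\eta(\mu(t))$: setting $\widetilde\chi=\widetilde\eta(\mu(t))-\eta(t)$ and subtracting the dynamic equation (\ref{dec1})$_2$ from the static fixed-point relation (\ref{eq:psizero}) yields $-\Delta\widetilde\chi=\bQ_m V'(\mu+\eta)-\bQ_m V'(\mu+\widetilde\eta)+\eta'$, whence $(\Lambda-L_{V'})\,|\widetilde\chi|\leq|\eta'|$. The only external input is Marion's lemma $|\eta'(t)|\leq\kappa\delta$ (Lemma~\ref{lem:qqprimbounds}), which immediately gives $|\widetilde\chi|\leq\widetilde\kappa\delta^2$. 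Your route instead triangulates through the primal AIM $\mathcal{M}_0$: you invoke the full Foias--Manley--Temam/Marion estimate for $\Phi_0$ as a black box and then control $\|\widetilde\eta(\mu)-\Phi_0(\mu)\|$ by the elementary spectral/Lipschitz computation. Both arguments are sound and ultimately rest on the same a~priori bounds from \cite{Marion:1989}; the paper's version is slightly more self-contained (it needs only the $|\eta'|$ bound, not the full $\mathcal{M}_0$ result), while yours makes the relationship between $\widetilde{\mathcal{M}}$ and $\mathcal{M}_0$ explicit and shows that the two manifolds are themselves $O(\delta^2)$-close, which is informative in its own right. One small point of care: the cited estimate $d(u(t),\mathcal{M}_0)\leq C_1\delta^2$ is a~priori weaker than the bound $\|\eta(t)-\Phi_0(\mu(t))\|\leq C_1\delta^2$ you actually use in the triangle inequality; fortunately Marion's proof establishes the latter, so the gap is purely cosmetic.
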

Preliminary considerations and a Lemma contained in  \cite{Marion:1989}.  We adapt to our case $\widetilde{\mathcal{M}}$ the argument used by \cite{Marion:1989} for the AIM $\mathcal{M}_0$. 
\label{AIMdistproof}
We begin by   recalling some notations and results introduced in \cite{Marion:1989}.
We set 
\begin{equation}
	\lambda := \lambda_m, \qquad \Lambda := \lambda_{m+1},\qquad \delta := \frac{\lambda_1}{\lambda_{m+1}}. 
\end{equation}

Clearly $(-\Delta)^{-1}$ is compact and we can define the powers $ (-\Delta)^\beta$ for any $\beta\in\R$. $D((-\Delta)^\beta)$ is a Hilbert space if endowed with the norm $\abs{(-\Delta)^\beta u}$, where $\abs{\cdot}$ denotes the norm of $L^2(\Omega)$. 

We have that $\Pm$ and $\Qm$ commute with $(-\Delta)^\beta$, for all $\beta\in\R$, and furthermore 
\begin{align}
 	\abs{(-\Delta)^{\beta+\frac{1}{2}} \mu}^2 \leq & \lambda \abs{(-\Delta)^{\beta} \mu}^2, & \forall \mu\in\Pm D((-\Delta)^\beta), \label{eq:condsVa}\\ 
 	\abs{(-\Delta)^{\beta+\frac{1}{2}} \eta}^2 \geq & \Lambda \abs{(-\Delta)^{\beta} \eta}^2, & \forall \eta\in\Qm D((-\Delta)^\beta). \label{eq:condsVb}
\end{align}

\begin{lemma}[See \cite{Marion:1989}]\label{lem:qqprimbounds}
 	Assume that hypothesis \ref{hyp:vprime} holds. Then there exists a time $t^\star$ and a constant $\kappa$ depending only on the data $(\Omega, d, V')$ and on $R$ when $\abs{u_0}\leq R$, such that  the tail $\eta(t) = \Qm u(t)$ of any orbit $u(t)$ of (\ref{RD}) becomes small in the following sense: 
	\begin{equation}
		\abs{\eta(t)} \leq \kappa\delta, \qquad \abs{\eta'(t)} \leq \kappa\delta, \qquad \forall t\geq t^\star.
	\end{equation}
\end{lemma}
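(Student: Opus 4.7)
The plan is to follow the two-tier \emph{a priori} estimate programme used by Marion \cite{Marion:1989}: first absorb $u(t)$ into a ball in $L^2$ (and then its time derivative $u_t$ as well); second, project (\ref{RD}) onto $\Qm$ and close the energy estimate via the spectral-gap inequality $|\nabla\eta|^2\ge \Lambda|\eta|^2$, which is (\ref{eq:condsVb}) with $\beta=0$. The factor $\Lambda^{-1}$ produced by the gap is exactly what yields the $\delta=\lambda_1/\Lambda$ smallness in the conclusion. The preparatory absorbing-ball estimate for $u$ is standard: the Lyapunov identity (\ref{ILF}) combined with $V$ bounded below --- which follows from $\gamma$ being compactly supported, so that $M:=\|\gamma\|_\infty<\infty$ --- yields a time $t_0$ and a radius $R_0=R_0(R)$ with $|u(t)|\le R_0$ for all $t\ge t_0$ whenever $|u_0|\le R$.

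For the bound $|\eta(t)|\le \kappa\delta$, project (\ref{RD}) onto $\Qm$ to get $\eta_t=\Delta\eta-\Qm V'(u)$ and test against $\eta$:
\begin{equation*}
\tfrac{1}{2}\tfrac{d}{dt}|\eta|^2+|\nabla\eta|^2=-\langle \Qm V'(u),\eta\rangle.
\end{equation*}
Since $|V'(u)|_{L^2}\le M|\Omega|^{1/2}$ uniformly, inserting $|\nabla\eta|^2\ge \Lambda|\eta|^2$ and Young's inequality yield
\begin{equation*}
\tfrac{d}{dt}|\eta|^2+\Lambda|\eta|^2\le M^2|\Omega|/\Lambda.
\end{equation*}
Gronwall then gives $|\eta(t)|^2\le |\eta(t_0)|^2 e^{-\Lambda(t-t_0)}+M^2|\Omega|/\Lambda^2$; for $t\ge t_1^\star$, once the exponential is absorbed into the forcing term, one concludes $|\eta(t)|\le (M|\Omega|^{1/2}/\lambda_1)\,\delta$.

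For the bound $|\eta'(t)|\le \kappa\delta$, differentiate (\ref{RD}) in time and set $w=u_t$: one gets $w_t=\Delta w-V''(u)\,w$, and the $\Qm$-projection $\eta'=\Qm w$ satisfies $(\eta')_t=\Delta\eta'-\Qm(V''(u)\,w)$. Testing against $\eta'$, using $\|V''\|_\infty\le L_{V'}$, and repeating the spectral-gap and Young manipulation give
\begin{equation*}
\tfrac{d}{dt}|\eta'|^2+\Lambda|\eta'|^2\le L_{V'}^2 |w(t)|^2/\Lambda.
\end{equation*}
A parallel absorbing-ball argument for $w$ itself --- test $w_t=\Delta w-V''(u)w$ against $w$, use the integrated gradient identity $\int_0^T|w(s)|^2 ds=\mathcal{J}(u(0))-\mathcal{J}(u(T))\le C$ from (\ref{ILF}), and invoke the uniform Gronwall lemma --- produces $|w(t)|\le R_1$ for $t\ge t_2$. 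Gronwall on the previous inequality then gives $|\eta'(t)|\le (L_{V'}R_1/\lambda_1)\,\delta$ for $t\ge t^\star:=\max(t_1^\star,t_2+\tau)$, and $\kappa$ is the maximum of the two prefactors.

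The principal obstacle is the rigorous meaning of $V''$: under hypothesis \ref{hyp:vprime}, $\gamma$ is only Lipschitz, so $V''=\gamma'$ exists merely a.e.\ as an $L^\infty$ function (Rademacher). The standard remedy is to mollify $\gamma$ to $\gamma_\varepsilon\in C^\infty(\R)$ with $\|\gamma_\varepsilon\|_\infty\le M$ and $\|\gamma_\varepsilon'\|_\infty\le L_{V'}$ uniformly in $\varepsilon$, run all the above estimates on the regularized PDE $u_t=\Delta u-\gamma_\varepsilon(u)$ --- whose bounds depend only on $M$, $L_{V'}$, $\Omega$ and $R$ --- and pass to the limit $\varepsilon\to 0$ via the standard continuous dependence of the parabolic semigroup on the nonlinearity.
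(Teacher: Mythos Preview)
The paper does not supply its own proof of this lemma: it is stated with the attribution ``See \cite{Marion:1989}'' and then used as a black box in the proof of Theorem~\ref{teo:aimdist}. Your sketch is therefore not to be compared against anything in the paper itself; it is rather a reconstruction of Marion's original argument, and as such it is essentially correct in strategy --- energy estimate on the $\Qm$-projected equation, spectral-gap inequality~(\ref{eq:condsVb}) to extract the factor $\Lambda^{-1}$, Gronwall, then repeat for the time derivative.

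Two small points are worth flagging. First, your absorbing-ball step for $u$ invokes $\mathcal{J}(u_0)$, which presupposes $u_0\in H^1$, whereas the hypothesis is only $\abs{u_0}\le R$ in $L^2$; one should insert the standard parabolic smoothing $u(t)\in H^1$ for $t>0$ before starting the Lyapunov argument (Marion handles this explicitly). Second, you should verify that the waiting time $t^\star$ needed to absorb the transient $e^{-\Lambda(t-t_0)}$ into a $\delta^2$-size term can be chosen independently of $m$; this holds because $\Lambda\ge\lambda_2>0$ and $\Lambda^{-1}\log\Lambda$ is bounded, but the lemma's claim that $t^\star$ depends only on $(\Omega,d,V',R)$ would otherwise be in jeopardy. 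The mollification device for handling the merely Lipschitz $\gamma$ is a clean way to justify the differentiated equation and is consistent with the uniform bounds you need.
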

\begin{proof}[Proof of Theorem \ref{teo:aimdist}]
	Let  $u= \mu+\eta$ be an orbit of (\ref{RD}). 
For every $t>0$, we set $\widetilde \eta(t) := \widetilde\eta(\mu(t))$.
We have that $\widetilde u(t) := \mu(t) + \widetilde \eta(t) \in \widetilde{\mathcal{M}}$, therefore, 
\begin{equation}
	\mathrm{dist}\tonde{u(t),\widetilde{\mathcal{M}}} \leq 
	\abs{\widetilde u(t) - u(t)} = \abs{\widetilde \eta(t) - \eta(t)}, 
\end{equation}
and for our purposes it suffices to estimate the norm in $\mathcal{H}$ of 
\begin{equation}
	\widetilde\chi(t) := \widetilde \eta(t) - \eta(t). 
\end{equation}
By definition of $\widetilde \eta$, (i.e.,  (\ref{dec2})$_2$) we have 
\begin{equation}
	-\Delta \widetilde \eta + \Q_m V' (\mu + \widetilde \eta) = 0, 
\end{equation}
then, subtracting (\ref{dec1})$_2$, we find
\begin{equation}
	-\Delta \widetilde\chi = \Qm V'(\mu+\eta) - \Qm V'(\mu+\widetilde \eta) + \eta', 
\end{equation}
from which, by Lemma \ref{lem:qqprimbounds}, it follows that 
\begin{equation}
	\abs{\Delta \widetilde\chi } \leq L_{V'}\abs{ \eta - \widetilde \eta} + \abs{\eta'}. 
\end{equation}
Since $\widetilde\chi\in\Qm\cH$, the estimate\footnote{By applying twice (\ref{eq:condsVb}), we write $\abs{\Delta^1 \chi}^2 \geq \Lambda \abs{\Delta^\frac{1}{2} \chi}^2 \geq 
\Lambda^2 \abs{\Delta^0 \chi}^2$, from which it follows $ \abs{\Delta \chi} \geq \Lambda \abs{\chi}$.
} 
(\ref{eq:condsVb}) holds and this leads to
\begin{gather}
	\Lambda \abs{\widetilde\chi} \leq   \abs{\Delta \widetilde\chi} \leq  L_{V'}\abs{ \widetilde \chi} + \abs{\eta'}\quad \Longrightarrow \\
	(\Lambda - L_{V'})\abs{\widetilde\chi} \leq  \abs{\eta'} \leq \kappa \delta \qquad \Longrightarrow \\
	\abs{\widetilde\chi} \leq  \frac{\kappa}{(\Lambda - L_{V'})} \delta = \kappa\frac{\Lambda}{\lambda_1(\Lambda - L_{V'})}\frac{\lambda_1}{\Lambda} \delta \leq \widetilde\kappa \delta^2
\end{gather}
which concludes the proof. 

\end{proof}

\goodbreak

\section{Fokker Planck equation and large deviations for the reduced gradient dynamics}

The finite dimensional gradient reduced equation $(\ref{gra})$ 
$$
\frac{d\mu}{dt}  = -{\nabla W}(\mu), \quad \mu \in \mathcal{D} \subseteq \mathbb{R}^m
$$
gives an approximate description of the dynamics:    in a broad sense, the reduced dynamics on the approximate inertial manifold stays close to the  true   system evolution, at least after a finite time interval and in the neighborhood of an equilibrium. Here $\mathcal{D}$ can be bounded, unbounded or the entire $\mathbb{R}^m$. Moreover in the following we will interpret the function $W$ introduced in $(\ref{Fr})$ as a sort of potential energy for the reduced system; secondly we will consider  a stochastic  version of the deterministic ODE $(\ref{gra})$, namely equation $(\ref{stogra})$ below. Implicitly we are assuming that the added noise will be able to mimic the loss of information introduced by considering the reduced dynamics in place of the original one\footnote{We recall that for the    reduced equation the set of critical points is  in a one-to-one correspondence with the equilibria of the original reaction--diffusion equation}. The closeness to the original system evolution of the stochastic trajectory will be dealt  with  in the framework of dynamic large deviation theory (Friedlin-Wentzell theory, see \cite{FW}). In this Section we give a cursory view of this approach.
Let us consider the SDE    associated to $(\ref{gra})$
\be
\label{stogra}
  \frac{d\mu}{dt} = - {\nabla W}(\mu) + \sqrt{\nu}\,w
\ee
where we   denote with $\nu>0$ the diffusion coefficient and  with $w$ the gaussian (white) noise;  its associated Fokker-Planck  equation for $p_\nu(t,\mu)$ in a spatial domain $\mathcal{D} \subseteq \mathbb{R}^m$   reads
 \be
\label{FPgra}
\frac{\partial p_\nu}{\partial t} - \nabla \cdot (p_\nu{\nabla W}) =  \nu \Delta p,  \ee
As it is well known, the stationary $(\frac{\partial p_\nu}{\partial t} \equiv 0)$ solution of the Fokker-Planck equation is
\be
\label{FPeq}
p_\nu^{eq} (\mu) = Z(\nu)^{-1}e^{-\frac{W(\mu)}{  \nu}},\qquad Z(\nu)= \int_\mathcal{D} e^{-\frac{W(\mu)}{  \nu}}d\mu
\ee
This solution exists  whenever $W$ is bounded from below and it grows rapidly enough to ensure that $Z$ is finite. 
The equilibrium solution is globally attractive and a Lyapunov function for it is given by the \emph{relative entropy}
\be\label{re}
 { H}(p_\nu \vert \, p^{eq}_\nu ) (t)= \int_\mathcal{D} p_\nu(t,\mu ) \ln  \frac{p_\nu(t,\mu)}{p^{eq}_\nu (\mu) }d\mu
\ee
 Moreover, it holds that in any finite time interval $[0,t]$, the trajectories of the SDE $(\ref{stogra})$ tend to the the solutions of the deterministic system $(\ref{gra})$ in the vanishing viscosity limit. This qualitative statement can be given a precise meaning using the language of Large Deviations. 
 \subsection{Large Deviations}
 
 We refer to classical literature on the subject for details and we state here only the main ideas. For sake of simplicity we  adopt hereafter the symbols $\mu = x$ and $X = -{\nabla W}$. 
 
 Let $\mathcal{D}^{[0,t]}$ be the set of all maps from the interval $[0,t]$ into $\mathcal{D}$; for a given initial condition $x(0) = x_0$, let us introduce the set 
 $$
  \mathcal{C}_\nu^{(x_0,t)} = \set{ x_\nu(\tau)\colon \tau \in [0,t], \ x(0) = x_0 } \subset \mathcal{D}^{[0,t]}
 $$
 of  continuos  trajectories  of the SDE $(\ref{stogra})$ that we rewrite as
 $$\frac{dx}{dt} = X(x) +   \sqrt{\nu}\,w$$ 
 and let us denote with $P_\nu$ the probability measure induced by the SDE on the measurable space $\mathcal{D}^{[0,t]}$.
Friedlin and Wentzell have shown that, if the vector field $X$ is Lipschitz continuous, the $\nu$-family of measures $P_\nu$ on $\mathcal{C}_\nu^{(x_0,t)}$ satisfies a large deviation principle in the form: for every $A \subset \mathcal{D}^{[0,t]}$
 it holds that
 $$
 \limsup_{\nu \rightarrow 0} \nu \ln P_\nu (A)  \leq - \inf_{x(\cdot ) \in \textit{cl}(A)}I_t[x(\cdot )], $$ 
 $$  \liminf_{\nu \rightarrow 0} \nu \ln P_\nu (A)  \geq - \inf_{x(\cdot )\in \textit{int}(A)}I_t[x(\cdot )]
 $$
  with respect to the Friedlin-Wentzell 
   \emph{action functional}   
\be
\label{actfunctional}
 I_t[x(\cdot )] = \frac{1}{2 }\int_0^t |\frac{dx}{dt}(\xi) - X(x(\xi)) |^2 d\xi,
\ee

In the sequel we will use the more compact notation  
 \be
 \label{LDP1}
  P_\nu [A] \asymp e^{-\frac{1}{\nu} \inf_{x(\cdot)\in A} I_t[x(\cdot)]}, 
 \ee
 where  
 $ I_t[x(\cdot)]$ is said \textit{rate function} and (after Ellis) the 
 symbol
 $\asymp$ stands for logarithmic equivalence
\be
a_\nu \asymp b_\nu \quad \Leftrightarrow \quad \lim_{\nu \rightarrow 0} \nu \ln a_\nu =  \lim_{\nu \rightarrow 0} \nu \ln b_\nu 
\ee
 
 The \textit{contraction principle} --a theorem by Varadhan, see e.g. \cite{Touchette}-- allows us to prove a large deviation principle for the probability density on $\mathcal{D}$, inherited from the above (\ref{LDP1}). In our case it does work in the following way. Given the map (projection)
 \be
 \pi :  \mathcal{C}_\nu^{(x_0,t)} \subset \mathcal{D}^{[0,t]}\longrightarrow \mathcal{D}\subseteq \mathbb{R}^m , \qquad \pi (x(\cdot)): = x(t),
 \ee
let us consider the \textit{push-forward} measure
 \be
 \label{prob}
 p_\nu\, dx = \pi_* P_\nu\, dA
 \ee
 The map $\pi$ is a random variable and the above introduced probability density $p_\nu$ is its  associated density. By the contraction principle, the $\nu$-family of probability densities $p_\nu$ satisfies a large deviation principle
  that  in compact notation reads
   \be
 \label{LDP2}
p_\nu^{x_0}(t,x) \asymp e^{-\frac{1}{\nu} \inf_{x(\cdot): \pi (x(\cdot))= x} I_t[x(\cdot)]}  \ee

with respect to the induced rate function
 \be
 \label{V}
 V(t, x, x_0) =   \mbox{inf}\   \set{I_t[x(\cdot )] \colon \  x(\cdot ) \in  \mathcal{C}_\nu^{(x_0,t)}, \ \pi (x(\cdot )) = x },
\ee
 such that we can   write
 \be
 \label{pgd2}
p_\nu^{x_0}(t,x) \asymp e^{-\frac{1}{\nu}V(t,x,x_0)};    \ee
 $V(t,x,x_0)$ is called the \emph{quasi-potential} and  
  \be
 \label{QP}
 L(x, \dot x) = \frac{1}{2 }| \dot x - X(x)|^2
\ee
 is the \emph{Lagrangian} of the stochastic process. 
 
 The above results open  the way to consider the solution of the deterministic system $(\ref{gra})$ between prescribed initial and final configurations as the optimal (in the sense of a least action principle) path. Following a standard reference \cite{BC}  on the subject, the starting point is the evolutive Hamilton-Jacobi equation
 \be
\label{HJ1}
 \frac{\partial S}{\partial t} + H\tonde{x, \nabla S} = 0
\ee
  for the Hamiltonian
\be
\label{hh}
 H(x,p) = \frac{1}{2}  |p|^2 + p\cdot X(x)
\ee
 associated to the above introduced Lagrangian (\ref{QP}). As it is well known,  the weak global solution of the  problem below
 \be
\label{HJ}
   \frac{\partial S}{\partial t}  + \frac{1}{2} |\nabla S|^2 + \nabla S\cdot X = 0, 
   \quad  S(0, x_0,x_0) =0   
   \ee
  can be written by the Lax-Oleinik viscosity solutions formula
\be
\label{S(t,x,x_0)}
 S(t,x,x_0) = \mbox{inf}  \set{ \int^t_0 L(x(\xi),\dot x(\xi)) d\xi \ \colon \ x(0) =x_0, \ x(t) = x }
\ee
Therefore we see immediately that \emph{the quasi-potential can be seen as the solution of a Hamilton-Jacobi equation}.

 \subsection{The gradient vector field case} We have seen before that if the vector field is a gradient,  $X = -{\nabla W}$, the equilibrium density $p^{eq}_\nu(x)$, up to normalization, is 
 $$
 p^{eq}_\nu(x) = e^{-\frac{ 1}{  \nu}W(x)}
 $$
In   the   Friedlin-Wentzell theory, in a neighborhood of an equilibrium point $\hat x$ , $X(\hat x) = 0$,  the equilibrium density $p^{eq}_\nu(x)$ has the form
 $$
 p^{eq}_\nu(x) \asymp e^{-\frac{1}{\nu}V^{eq}_\infty(x, \hat x)} 
 $$
with respect to  the quasi-potential 
$
 V^{eq}_\infty(x,\hat x) =\lim_{t \to +\infty} V(t, x, \hat x)
$,
\be
\label{veq}
 V^{eq}_\infty(x,\hat x)  = \inf \set{ \frac{1}{2 }\int_0^\infty |\dot x - X(x(\xi)) |^2 d\xi \ \colon \  x(\cdot ) \in \mathcal{C}_\nu^{(\hat x,\infty)}, \ \pi (x(\cdot ))=\lim_{t\to \infty}x(t) = x }
 \ee
We now show that for a gradient vector field $$X = - \nabla W$$ one has that $V^{eq}_\infty(x,\hat x) =  W(x)$ in a neighborhood of an equilibrium point. This can be seen   directly by expanding the Lagrangian
$$
| \dot x + {\nabla W}|^2 = \dot x^2 + 2 {\nabla W}\cdot \dot x + |{\nabla W}|^2 = \dot x^2 +   |{\nabla W}|^2 + 2\frac{dW}{dt},
$$
 hence, for every $t$,
 $$
 \frac{1}{2 }\int_0^t |\dot x (\xi)- X(x(\xi)) |^2 d\xi = \frac{1}{2 }\int_0^t (\dot x^2 + |{\nabla W}|^2)d\xi +   W(x) - W(\hat x) 
  $$
 It is not restrictive to suppose that $W(\hat x) = 0$. Moreover by the inf procedure after the limit $t \rightarrow +\infty$ we are left with
\be
\label{ris}
  V^{eq}_\infty(x,\hat x) =  W(x)   
\ee
 therefore, as expected, in the infinite time limit the large deviation description  of the probability density tends to the equilibrium solution of the F-P equation.
 \subsection{Viscosity solutions}

 It is instructive to point out another interesting relation between the above outlined large deviation theory and the viscosity solution approach to PDE. Inspired by the logaritmic equivalence $(\ref{pgd2})$ above  holding in the $\nu$-vanishing  limit,
 $$
 p_\nu^{x_0}(t,x) \asymp   e^{-\frac{1}{\nu}S(t,x,x_0)},
  $$
we  introduce the Cole-Hopf transformation (dropping $x_0$):
\be
\label{CH}
 p_\nu(t,x) = e^{-\frac{1}{\nu}\hat S_\nu(t,x)}
 \ee
By inserting this representation of $p_\nu(t,x)$ into the Fokker-Plank equation $(\ref{FPgra})$ above\footnote{Precisely, we consider the Fokker-Planck equation $(\ref{FPgra})$ with a factor $\nu/2$ in front of the Laplacian }, we see that the function $\hat S(t,x)$ must satisfy the equation
\be
\label{HJS}
 \frac{\partial \hat S_\nu}{\partial t}  + |\nabla \hat S_\nu|^2 + \nabla \hat S_\nu\cdot X =
  \nu \nabla \cdot X +  \frac{\nu}{2} \Delta \hat S_\nu
\ee
We can interpret it as a Hamilton-Jacobi equation with a perturbation $ \nu \nabla \cdot X$ in the Hamiltonian and  a \emph{viscosity} term $ \frac{\nu}{2}\Delta \hat S_\nu$. We stress the fact that the diffusion coefficient $\nu$ of the stochastic approach above is interpreted here as a viscosity term. Coupling the viscosity stability theorem (see \cite{BC}) and the   definition of viscosity solution,
we   know that, in the $\nu \rightarrow 0$   limit, the solution $\hat S_\nu$ of (\ref{HJS}) tends  in the ${\mathcal C}^0$ topology to the viscosity solution of (\ref{HJ}), which can be represented using the Lax-Oleinik formula
 $$
 \hat S(t,x) = \mbox{inf}   \set{ \frac{1}{2}\int_0^t |\dot x(\xi) - X(x(\xi)) |^2 d\xi \ \colon \ x(\cdot ) \in \mathcal{C}^1, \  x(0) = x_0, \ x(t) = x }
 $$ 
 Note that the value of the inf above does not depends on the degree of regularity we assume for the path  $x(\cdot)$.  Also,  standard facts from weak KAM theory, see e.g. \cite{D-S, F}, tell us that any viscosity solution of (\ref{HJ}), for $t\to +\infty$,  is ${\mathcal C}^0$-asymptotic (i.e.  in the uniform convergence topology)   to $\hat S(x)-ct$,  where
 $\hat{S}$ is a suitable viscosity solution of the related stationary H-J equation,
    \be
\label{SHJ}
 \frac{1}{2} |\nabla \hat S|^2 + \nabla \hat S \cdot X(x) = c
   \ee
 at the real value
$$c:=\inf_{u\in {\mathcal C}^1(\mathcal{D},\R)} \sup_{x\in\mathcal{D}} H(x, \nabla u(x))$$
called the \textit{Ma\~n\'e critical value}. 
 
 Now we recall that in our setting the Ma\~n\'e critical value is $c=0$; this fact is popular in the compact case (see below $(i)$), but it is also true in the non compact case, if $X$ is a gradient field admitting at least one critical value: this is precisely guaranteed by the assumption that $W$ is bounded from below, just as the functional $\mathcal{J}$ is.
\begin{proposition}\label{Prop1}\par\noindent$(i)$ Let $ \mathcal{D}$ be a $m$--dimensional closed Riemann manifold, and let $X:\mathcal{D}\to T\mathcal{D}, x\mapsto (x,X(x))$, be a vector field. Then the \textit{Ma\~n\'e critical value}  $c$ of the Hamiltonian $H=\frac{1}{2}\abs{p}^2 + p \cdot X(x)$ in (\ref{hh}) is vanishing:
\begin{equation}
	c=\inf_{u\in {\mathcal C}^1(\mathcal{D},\R)} \sup_{x\in\mathcal{D}} \left(\frac{\abs{\nabla u(x)}^2}{2} + \nabla u(x) \cdot X(x)\right) = 0
\end{equation}
\par\noindent$(ii)$ In the non compact case $\mathcal{D}=\R^m$ and for $X=-\nabla W$, if there exists some critical point $\hat x$, $\nabla W(\hat x)=0$, then the thesis $c=0$ is still true.
\end{proposition}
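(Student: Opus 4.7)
The plan is to handle both parts by producing matching upper and lower bounds for $c$. Throughout we use $H(x,p)=\tfrac12|p|^2+p\cdot X(x)$, so that $H(x,0)=0$ everywhere, and in part (ii) we specialize to $X=-\nabla W$ and write
\[
H(x,\nabla u(x))=\tfrac12|\nabla u(x)|^2-\nabla u(x)\cdot\nabla W(x).
\]

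For the upper bound $c\le 0$ (in both parts) I would simply plug the constant test function $u\equiv\mathrm{const}$ into the variational definition: then $\nabla u\equiv 0$, so $H(x,\nabla u(x))\equiv 0$, and $\sup_{x}H(x,\nabla u(x))=0$. Taking the infimum over $u\in C^1$ yields $c\le 0$.

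For the lower bound $c\ge 0$ in part $(i)$, I would exploit compactness of $\mathcal{D}$: any $u\in C^1(\mathcal{D},\mathbb{R})$ attains its maximum at some $x^\star\in\mathcal{D}$, and at such an interior maximum $\nabla u(x^\star)=0$. Hence $H(x^\star,\nabla u(x^\star))=0$, so $\sup_{x}H(x,\nabla u(x))\ge 0$ for every admissible $u$, giving $c\ge 0$. Combined with the upper bound this yields $c=0$.

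For part $(ii)$ the manifold $\mathbb{R}^m$ is no longer compact, so $u$ need not attain a maximum; this is the only delicate point. The hypothesis to rescue us is the existence of a critical point $\hat{x}$ of $W$. Evaluating at $\hat x$ we have $\nabla W(\hat{x})=0$, hence
\[
H(\hat{x},\nabla u(\hat{x}))=\tfrac12|\nabla u(\hat{x})|^2\ge 0
\]
for every $u\in C^1(\mathbb{R}^m,\mathbb{R})$. Therefore $\sup_{x\in\mathbb{R}^m}H(x,\nabla u(x))\ge 0$ for all such $u$, and taking the infimum gives $c\ge 0$. Together with the constant–$u$ upper bound we again obtain $c=0$. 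The main conceptual point — and the only place where some care is needed — is noticing that in the non-compact case the gradient structure $X=-\nabla W$ together with the existence of at least one equilibrium $\hat x$ plays exactly the role that the compact maximum played in part $(i)$: it provides a distinguished point where $X$ vanishes, so that the Hamiltonian reduces to the non-negative kinetic term $\tfrac12|\nabla u|^2$.
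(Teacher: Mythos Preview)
Your proof is correct and follows essentially the same approach as the paper: the constant test function gives $c\le 0$, and for $c\ge 0$ you evaluate at a point where the cross term vanishes --- a critical point of $u$ in the compact case (you use the maximum, the paper just says ``critical point''), and the equilibrium $\hat x$ of $W$ in the non-compact gradient case.
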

%\end{remark}
\begin{proof}
	Let $f_u(x)=\frac{1}{2}\abs{\nabla u(x)}^2 + \nabla u(x) \cdot X(x)$. First we see that
		 $$\inf_u\sup_x f_u(x) \leq 0,$$
		indeed, if $u\equiv 0$, $f_u(x)\equiv 0$ and therefore the $\inf_u$ must be equal or smaller than 0. Secondly,
		 $$\inf_u\sup_x f_u(x) \geq 0,$$
since $\mathcal{D}$ is a compact manifold, any $u$ must have at least one critical point $x_0$, $\nabla u(x_0) = 0$. Then $f_u(x_0) =0$ and $\sup_x f_u(x) \geq f_u(x_0) =0$. Because this holds for any $u$, therefore $\inf_u \sup_x f_u(x) \geq 0$. In order to prove $(ii)$, for any $u$, by evaluating at $x=\hat x$, we see that $f_u(\hat x)=\frac{1}{2}\abs{\nabla u(\hat x)}^2\geq 0$, and we conclude.
	
\end{proof}

  \section{Macroscopic field description of Nonequilibrium Thermodynamics} 
 \label{MFD}

 We recapitulate the theory developed so far: starting from a deterministic ODE $(\ref{gra})$, which gives the approximate evolution of our reaction-diffusion system, we introduced the associated SDE $(\ref{stogra})$ by adding    a  gaussian   noise term $\sqrt{\nu}w$. The set of stochastic (continuous) trajectories can be equipped with a probability density which satisfies a large deviation principle: the probability of a path is exponentially decreasing around the deterministic trajectory corresponding to vanishing  noise. The associated rate function $(\ref{actfunctional})$ is the \emph{action functional} of the Friedlin-Wentzell theory. An analogous large deviation results holds for the probability density $p_\nu(x,t)$ of the associated Fokker-Planck  equation. In this latter case, the associated rate function, called the \emph{quasi potential}, can be obtained as the solution of an Hamilton-Jacobi type PDE $(\ref{HJ})$.
 
 \subsection{Sketch of the theory of driven diffusive systems}

  In a series of papers  \cite{PRL, JL2009, IL, MFT}  G. Jona-Lasinio and coworkers expose a theory   of thermodynamic systems in a stationary non equilibrium state (driven diffusive systems) in terms of a finite number of space--time  macroscopic fields representing   thermodynamic variables. 
 It turns out that strictly analogous  objects  to  those above discussed, like the action functional,  the quasi potential and the Hamilton-Jacobi type equation do arise in the theory of driven diffusive systems developed by G. Jona-Lasinio et al., albeit in an infinite-dimensional setting. Moreover    the quasi-potential is interpreted as the non-equilibrium free energy of the system (see Sect.\,\ref{MFD} below).
Therefore, the reduced dynamic and its associated stochastic version presented in this paper may be interpreted as a sort of finite dimensional skeleton of the theory of driven diffusive systems.

In what follows we will sketch the results of this approach using the single scalar space time field of the macroscopic density $\rho (x,t)$, $x\in \mathcal{D}$. The theory has an axiomatic format that stems from a paradigmatic consideration of lattice gas models.

   \smallskip

 \textbf{Axiom} (\cite{JL2009}) \textit{The macroscopic evolution of the field $\rho(x,t)$ is given by the continuity equation
 \be
 \label{axiom1}
 \rho_t + \nabla \cdot j = 0
\ee
 together with the constitutive equation for the associated current $j $
\be
\label{axiom2}
 j (\rho) = -D(\rho) \nabla \rho + \chi(\rho) E
\ee
 where $D$ and $\chi$ are respectively the diffusion and mobility coefficients and $E(x,t)$ is the applied external field. }

 Appropriate boundary conditions are supplied and we are interested in the study of stationary non equilibrium solutions $\hat \rho$.

The key point is that, when considering the lattice gas model,  the above introduced continuity equation  is the hydrodynamic limit equation   for   density $\rho(x,t) $ which in turn is the limit of the empirical density $\rho_N  (z,t)$ of particles in a given point $z$ on the lattice $\Lambda \subset \mathbb{Z}^d$ with $N$ particles. Jona-Lasinio  and coworkers shows that  in the thermodynamic limit $N \rightarrow \infty$ the empirical density $\rho_N$ satisfies a Large Deviation principle. To introduce the associated rate function, we set
\be
\label{aaa}
I_t[\rho] = \frac{1}{4}\int_0^t dt \int_\mathcal{D} [\rho_t + \nabla \cdot j(\rho)] K(\rho)^{-1}[\rho_t + \nabla \cdot j(\rho)]dx \geq 0
\ee
  where the positive operator $K(\rho)$ is the analog of the diffusion coefficient $\nu$ in the
   finite-dimensional Friedlin-Wentzell theory and $K$ is defined by
  $$
  K(\rho) u = - \nabla \cdot (\chi(\rho)\nabla u), \qquad \forall  u \quad \mbox{such that}\  \ u= 0 \ \mbox{on} \ \partial \mathcal {D}.
  $$
 The functional $I_t$ represents the 'extra-cost' necessary to follow the trajectory $\rho$ and it is zero along the solutions of the continuity equation above.  Given a stationary solution $\hat \rho$ of the continuity equation, and another system evolution $\eta$ we set (we omit here the space dependence in the fields for simplicity's sake)
 \be
\label{enlibera}
 \mathcal{F} (\rho) = -\mbox{inf\,}_{\eta(\cdot)}\left\{ I_{\infty}[\eta]\ \colon \eta(0) = \hat \rho, \quad \lim_{t\to \infty}\eta(t) = \rho\right\}
\ee
 Jona-Lasinio and coworkers identify $\mathcal{F}(\rho) $ as the \emph{free energy of the system in the dynamic state} $\rho$. This last point is particularly interesting since a commonly accepted definition of free energy for a general system in a non equilibrium state is actually lacking in the literature.

 \subsection{Links between the gradient system and the Jona-Lasinio system}\label{confronto}
 \begin{itemize}
 
 \item Note that in the simplest case $D= const. $      the   continuity equation $(\ref{axiom1})$ above reads
 \be
 \label{axiom3}
 \rho_t  = - D\Delta \rho + \nabla \cdot (\chi E)
 \ee
 which is of the type considered in $(\ref{RD})$ although with a different source term.
 
 \item Also, the Lagrangian density function in $(\ref{aaa})$  is the equivalent of the Lagrangian  associated to the action functional in the finite dimensional (ODE) Friedlin-Wentzell theory and the quantity $\mathcal{F}(\rho) $ in $(\ref{enlibera})$ can   be seen as the infinite-dimensional analog of the quasi potential $V^{eq}(x)$ in $(\ref{veq})$. 
 
\item The above defined free energy function $\mathcal{F}$ parallels the definition of the quasi potential $V$, which is interpreted as the generating function of a canonical transformation in the Friedlin-Wentzell theory. Therefore  $\mathcal{F}$ can be seen as the solution of an infinite dimensional Hamilton-Jacobi equation of the form (see \cite{IL})
\be
\label{infinite}
\left\langle \nabla \frac{\delta \mathcal{F}}{\delta \rho}\cdot \chi(\rho) \nabla \frac{\delta \mathcal{F}}{\delta \rho} \right\rangle - 
\left\langle \frac{\delta \mathcal{F}}{\delta \rho} \nabla \cdot j(\rho) \right\rangle = 0
\ee
 where angular brackets stands for integration on the spatial domain $\mathcal{D}$. 
 \item   As a matter of fact,   in the finite dimensional case, see $(\ref{SHJ})$, using asymptotic theory of viscosity solutions we are able to prove that the right hand member of the finite-dimensional H-J equation is   the  Ma\~n\'e critical value  which  is zero in the present case. The need of setting the right hand member of $(\ref{infinite})$ equal to zero is not obvious in the Jona-Lasinio format.
 \end{itemize}
 \subsection{Free energy and stochastic dynamics}
 In this paragraph we want to draw attention to an interesting interpretation of the functional form of the non equilibrium free energy introduced in $(\ref{aaa})$. Let  $z$ be the phase space point  of a thermodynamical system and $E(z)$ the energy. It is well known that for a  discrete  thermodynamical system of average energy $e= \mathbb{E}_{\hat p}(E)$ in equilibrium with a thermal bath at temperature $T$  its statistical description is afforded by the Boltzmann-Gibbs distribution (here $k_B$ is Boltzmann constant)
 \be
\label{bg}
 \hat p(z) =  Z(T)^{-1}e ^{-\frac{E(z)}{k_BT} },  \quad Z(T) = \int e ^{-\frac{E(z)}{k_BT}}dz
\ee
 and its associated equilibrium free energy is  
\be
\label{eqfreen}
  \Psi (T) =  \mathbb{E}_{\hat p}(E)- k_BT H(\hat p) = - k_BT \ln Z 
\ee
 where $H(p) = -\int p_i \ln p_i$ is Shannon entropy.
 Note that the above definition of $\Psi$ is perfectly meaningful also for a different  probability density therefore we can define the  \emph{ non-equilibrium free energy} as
\be
\label{freen}
 \Psi (p) = \mathbb{E}_{ p}(E) - k_B T H(p)
\ee
 and it is easy to prove that  
 $$
\frac{ 1}{k_B T} [ \Psi (p) - \Psi (\hat p) ] =  \int p(z)\ln \frac{p(z)}{\hat p(z)} dz = {H} (p\vert\, \hat p) \geq 0
 $$
 where   ${H} (p\vert\, \hat p)$ is precisely the relative entropy (\ref{re}). 
  
   The relative entropy is therefore the quantity that measures the variation in free energy due to a fluctuation $p$ from the equilibrium $\hat p$. Since $H(p\vert\, \hat p)$ is a non negative quantity, we recover the well known fact that the free energy has a minimum in the stationary state with respect to finite deviations. 
   
   If we consider the system described by the SDE $(\ref{stogra})$ -- $(\ref{FPeq})$, we see that the equilibrium density $p_\nu^{eq}$ in $(\ref{FPeq})$ has the form of the Boltzmann-Gibbs distribution $(\ref{bg})$ if, at least formally, we   identify $i)$ the potential energy $W(x)$ of $(\ref{FPeq})$ with the system energy $E(z)$  and $ii)$ the diffusion  coefficient $\nu$ with the temperature $k_BT$.  Concerning $ii)$,  using the Einstein-Smoluchowsky relation 
 $$
 \nu = \mu\ k_BT 
 $$
 where $\mu $ is the (system dependent) mobility, we are allowed to interpret, up to a constant, the diffusion term $\nu$ as a 'stochastic temperature' and hence to \emph{define} the equilibrium free energy $(\ref{eqfreen})$ of the stochastic system     as   
 $$
 \hat \Psi (\nu) = \mathbb{E}_{\hat p}(W) -   \nu H(\hat p) =  -\nu \ln Z(\nu)
 $$
 and the \emph{out of equilibrium free energy} $(\ref{freen})$  as  
  $$
 \Psi_\nu (p)= \mathbb{E}_p(W)- \nu H(p).
 $$
 In the neighborhood of an equilibrium point $\hat x$ of $W$, we have -see $(\ref{veq})$ and $(\ref{ris})$- that $W  = V^{eq}_\infty  $ hence the out of equilibrium free energy of our stochastic system is 
\be
\label{enlibera2}
 \Psi_\nu (p)= \mathbb{E}_p(V^{eq}_\infty)- \nu H(p)
\ee
   to be compared with $(\ref{enlibera})$. It is worth noticing, however, that this proposed free energy functional is, in this form,  a function of the probability density and not of the macroscopic parameters (pressure, volume, etc.) that define the thermodynamic state. This last step can be carried over by e.g. applying the maximum entropy principle where  the probability density $p_{eq}$ is a function of the observed macroscopic parameters. In this way we have a complete correspondence, at least in principle,  between 
   $(\ref{enlibera})$ and $(\ref{enlibera2})$.

%%%%%%%%%%%%%%%%%%%%%%%%%%%%%%%%%%%%%%% 
 
\def\cprime{$'$} \def\cprime{$'$}

\end{document}